\newcommand{\eps}{\varepsilon}
\newcommand{\Ebb}{\mathbb{E}}
\newcommand{\Nbb}{\mathbb{N}}
\newcommand{\Pbb}{\mathbb{P}}
\newcommand{\Acal}{\mathcal{A}}
\newcommand{\Bcal}{\mathcal{B}}
\newcommand{\Mcal}{\mathcal{M}}
\newcommand{\Kcal}{\mathcal{K}}
\newcommand{\Pcal}{\mathcal{P}}
\newcommand{\Xcal}{\mathcal{X}}
\newcommand{\Ycal}{\mathcal{Y}}
\newcommand{\Zcal}{\mathcal{Z}}
\newtheorem{thm}{Theorem}
\newtheorem{cor}{Corollary}
\newtheorem{lemma}{Lemma}
\newtheorem{defn}{Definition}
\title{Secrecy Is Cheap if the Adversary Must Reconstruct}
\author{
\authorblockN{Curt Schieler, Paul Cuff}
\authorblockA{Dept. of Electrical Engineering,\\
Princeton University,
Princeton, NJ 08544.\\
E-mail: \{schieler, cuff\}@princeton.edu }
}
\begin{document}
\maketitle
\begin{abstract}
A secret key can be used to conceal information from an eavesdropper during communication, as in Shannon's cipher system.  Most theoretical guarantees of secrecy require the secret key space to grow exponentially with the length of communication.  Here we show that when an eavesdropper attempts to reconstruct an information sequence, as posed in the literature by Yamamoto, very little secret key is required to effect unconditionally maximal distortion; specifically, we only need the secret key space to increase unboundedly, growing arbitrarily slowly with the blocklength.  As a corollary, even with a secret key of constant size we can still cause the adversary arbitrarily close to maximal distortion, regardless of the length of the information sequence.
\end{abstract}

\section{Introduction}

In this work, we consider the Shannon cipher system, first investigated in \cite{Shannon1949}. The cipher system is a communication system with the addition of secret key that the legitimate parties share and use to encrypt messages. A classic result by Shannon in \cite{Shannon1949} states that to achieve perfect secrecy, the size of the secret key space must be at least the size of the message space.  As in \cite{Shannon1949}, we consider the secrecy resource to be shared secret key, but we relax the requirement of perfect secrecy and instead look at the minimum distortion that an adversary attains when attempting to reproduce the source sequence. The joint goal of Alice (transmitter) and Bob (receiver) is to communicate a source sequence almost losslessly while maximizing the adversary's minimum attainable distortion. In contrast to equivocation, a max-min distortion measure provides guarantees about the way in which any adversary could use his knowledge; equivocation does not give much insight into the structure of the knowledge or how the knowledge can be used.

This measure of security was investigated by Yamamoto in the general case where distortion is allowed at the legitimate receiver. In \cite{Yamamoto1997}, Yamamoto established upper and lower bounds on the tradeoff between the rate of secret key and the adversary's distortion.

In this paper, we solve the problem studied in \cite{Yamamoto1997}, in the case that almost lossless communication is required. We show that any positive rate of secret key suffices for Alice and Bob to cause the adversary unconditionally maximal distortion (i.e., the distortion incurred by only knowing the source distribution and nothing else). A positive rate of secret key $R_0$ means the number of secret keys is exponential in the blocklength $n$, because there are $2^{nR_0}$ secret keys available. However, if the secret key space is merely growing unboundedly with $n$, we show that the adversary still suffers maximal distortion. We also show that a constant amount of secret key can yield nontrivial distortion at the adversary.

\section{Problem Statement}

The system under consideration, shown in Figure \ref{setup_fig}, operates on blocks of length $n$.
\begin{figure}
  \begin{tikzpicture}
 [node distance=1cm,minimum height=7mm,minimum width=14mm,arw/.style={->,>=stealth'}]
  \node[coordinate] (source) {};
  \node[rectangle,draw,rounded corners] (alice) [right =of source] {Alice};
  \node[coordinate] (dummy) [right =1cm of alice] {};
  \node[rectangle,draw,rounded corners] (bob) [right =1cm of dummy] {Bob};
  \node[coordinate] (xhat) [right =of bob] {};
  \node[rectangle,minimum width=7mm] (key) [above =of dummy] {$K$};
  \node[rectangle,draw,rounded corners] (eve) [below =5mm of bob] {Eve};
  \node[coordinate] (zn) [right =of eve] {};

  \draw [arw] (source) to node[midway,above,yshift=-1mm]{$X^n$} (alice);
  \draw [arw] (alice) to node[midway,above,yshift=-1mm]{$M$} (bob);
  \draw [arw] (bob) to node[midway,above,yshift=-.5mm]{$\hat{X}^n$}(xhat);
  \draw [arw] (key) to [out=180,in=90] (alice);
  \draw [arw] (key) to [out=0,in=90] (bob);
  \draw [arw,rounded corners] (dummy) |- (eve);
  \draw [arw] (eve) to node[midway,above,yshift=-1mm]{$Z^n$} (zn);
 \end{tikzpicture}
 \caption{Alice and Bob share secret key $K$, which Alice uses along with her observation of $X^n$ to encode a message $M$. Secrecy is measured by the minimum distortion Eve can attain.}
\label{setup_fig}
\end{figure}
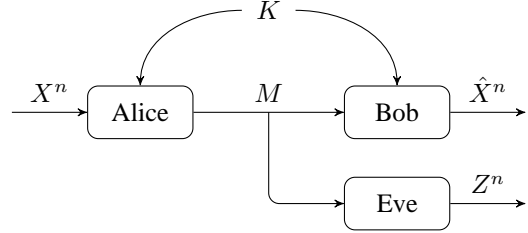
Alice is given an i.i.d. source sequence $X^n=(X_1,\ldots,X_n)$ consisting of symbols drawn from a finite alphabet $\Xcal$ according to $P_X$. Alice and Bob share secret key in the form of a uniform random variable $K$ taking values in an alphabet $\Kcal$. Eve knows the source distribution and the operations of Alice and Bob, but does not have access to the secret key. At the transmitter, Alice sends $M \in \Mcal$ based on the source sequence $X^n$ and secret key $K$; at the other end, Bob observes $M$ and $K$ and produces a sequence $\hat{X}^n$. Eve produces a sequence $Z^n$ from $M$ and her knowledge of the source distribution and system  operations (encoder and decoder).
\begin{defn}
 Let $k:\Nbb\rightarrow \Nbb$. An $(n,k(n),R)$ code consists of an encoder $f$ and a decoder $g$:
 \begin{IEEEeqnarray*}{l}
  f:\Xcal^n\times \Kcal \rightarrow \Mcal\\
  g: \Mcal \times \Kcal \rightarrow \Xcal^n,
 \end{IEEEeqnarray*}
where the size of the message set is $|\Mcal|=2^{nR}$, and the number of secret keys available is $|\Kcal|=k(n)$.
\end{defn}

We measure secrecy by the distortion between the source sequence and an adversary's estimate of the source sequence. Given a per-letter distortion measure $d:\Xcal\times\Zcal\rightarrow [0,\infty)$, we define the distortion between two sequences as the average of the per-letter distortions:
\begin{equation*}
 d^n(x^n,z^n) = \frac1n \sum_{i=1}^n d(x_i,z_i).
\end{equation*}
Without loss of generality, we assume that for all $x\in\Xcal$, there exists a $z\in\Zcal$ such that $d(x,z)=0$.

For a given amount of secret key, we are interested in the rate of communication and the distortion incurred by the cleverest adversary.
\begin{defn}
For a given sequence $k(n)$ and measure of distortion $d(x,z)$, we say that the pair $(R,D)$ is achievable if there exists a sequence of $(n,k(n),R)$ codes such that
\begin{equation}
 \label{errdefn}\lim_{n\rightarrow\infty}\Pbb[X^n\neq\hat{X}^n]=0
\end{equation}
and
\begin{equation}
 \label{paydefn}\liminf_{n\rightarrow\infty}\min_{z^n(m)}\Ebb\left[d^n(X^n,z^n(M))\right]\geq D.
\end{equation}
\end{defn}
The requirement in (\ref{errdefn}) is that the probability of communication error between Alice and Bob vanishes. In (\ref{paydefn}), the minimum is taken over all functions $z^n:\Mcal\rightarrow\Zcal^n$, i.e., all possible strategies that Eve can employ. Although not explicit in the notation, it should be understood that Eve's strategy is a function of not only the message $M$, but also the source distribution $P_X$ and the $(n,k(n),R)$ code.

\section{Main Result}

The main result is the following theorem. The restriction on $R$, the communication rate, is the same as the classic result for source coding. Notice that $\min_z \Ebb[d(X,z)]$ is the distortion between $X^n$ and the constant sequence $(z^*,\ldots,z^*)$, where $z^*=\mbox{argmin}_z \Ebb[d(X,z)]$.
\begin{thm}
 \label{mainthm}
 Let $k(n)$ be an increasing, unbounded sequence. Then $(R,D)$ is achievable if and only if\,\footnote{For simplicity, we ignore the case $R=H(X)$}
\begin{IEEEeqnarray*}{l}
 R>H(X)\\
 D\leq \min_z \Ebb[d(X,z)].
\end{IEEEeqnarray*}
\end{thm}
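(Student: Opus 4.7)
The plan is to prove the converse by standard tools and to handle achievability with a random ``key-indexed'' codebook that makes Eve's posterior a uniform mixture over $k(n)$ i.i.d.\ candidates.

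\textbf{Converse.} Since Bob recovers $X^n$ from $(M,K)$ with vanishing error, Fano's inequality gives $H(X) \leq R + \tfrac{1}{n}\log k(n) + \eps_n$, which forces $R \geq H(X)$ whenever $k(n)$ is subexponential. The distortion bound is immediate: Eve may always ignore $M$ and output the constant sequence $(z^*,\ldots,z^*)$ with $z^* = \mathrm{argmin}_z \Ebb[d(X,z)]$, incurring exactly $\min_z \Ebb[d(X,z)]$ in expected distortion, so no code can force a strictly larger $D$.

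\textbf{Achievability.} Fix $R > H(X)$ and $D = \min_z \Ebb[d(X,z)]$. Generate a random codebook $\{X^n(k,m) : k \in \Kcal,\, m \in \Mcal\}$ with all entries drawn i.i.d.\ from $P_X$. Alice's encoder selects any $m$ such that $X^n(K,m) = X^n$ (declaring failure if none exists), and Bob outputs $\hat{X}^n = X^n(K,M)$. Because $R > H(X)$, the expected number of matches for a typical source sequence inside any single sub-codebook grows like $2^{n(R-H(X))}$, so standard random-coding/covering arguments give vanishing encoder-failure probability averaged over $\Ccal$, handling (\ref{errdefn}).

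\textbf{Secrecy argument.} The structural claim is that, for typical codebook realizations, Eve's posterior $P(X^n \mid M = m,\Ccal)$ is approximately uniform on the list $\{X^n(k,m) : k \in \Kcal\}$. This follows from Bayes' rule together with concentration of the number of matches $N(k,x^n) \approx 2^{n(R-H(X))}$ for typical $x^n$. Given this structure, Eve's optimal per-letter decision at coordinate $i$ given message $m$ minimizes
\[
\frac{1}{k(n)} \sum_{k \in \Kcal} d\!\bigl(X_i(k,m),\, z_i\bigr)
\]
over $z_i \in \Zcal$. Since $\{X_i(k,m)\}_{k}$ are $k(n)$ i.i.d.\ draws from $P_X$ and $\Zcal$ is finite, the LLN (as $k(n) \to \infty$) forces this minimum to converge to $\min_z \Ebb[d(X,z)]$. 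Averaging over coordinates $i$ and messages $m$, then invoking the random-coding argument to extract a deterministic codebook with both vanishing error probability and the distortion lower bound, yields (\ref{paydefn}).

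\textbf{Main obstacle.} The delicate point is that $k(n)$ may grow arbitrarily slowly, so the per-coordinate LLN convergence---heuristically at rate $O(1/\sqrt{k(n)})$---must be reconciled with Eve's freedom to tailor $z^n(\cdot)$ to each of the $2^{nR}$ message values using the full knowledge of $\Ccal$. Controlling the bad codebook events (unusual match counts, or no-match keys for some $k$) and ensuring they contribute only a vanishing increment to the expected distortion, uniformly in $m$, is the principal technical balancing act.
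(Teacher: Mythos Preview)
Your outline is a genuinely different construction from the paper's, and it is plausible, but the proposal leaves the decisive step unresolved.

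\textbf{How the paper proceeds.} The paper does \emph{not} use an i.i.d.\ random codebook. Instead it takes the $\eps$-typical set, partitions it into bins of size exactly $k(n)$ with each bin containing sequences of a \emph{single type}, and sends the pair $(J, L\oplus K)$ where $J$ is the bin index and $L$ the position inside the bin. Because all sequences in a bin have the same type, Eve's posterior is \emph{exactly} uniform on the bin --- no ``approximately uniform'' claim is needed, and no concentration of match counts enters. The paper then selects the partition uniformly at random, writes the payoff as a sum over bins, and uses linearity to push $\Ebb_\pi$ inside:
\[
\Ebb_\pi[D(n)] \;\geq\; \frac{1}{n}\sum_{i}\sum_{P}\sum_{B\in\Bcal_P} c_P\,\Ebb_\pi\Bigl[\min_z \sum_{x^n\in B} d(x_i,z)\Bigr].
\]
Now only a \emph{single} random bin needs to be analyzed; by exchangeability every term is the same, and the Diaconis--Freedman bound (sampling without replacement vs.\ with replacement) plus the type-class size bound supply the LLN. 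No union bound over bins or messages is ever taken.

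\textbf{Where your sketch has a gap.} You correctly flag the obstacle --- the LLN fluctuation is $O(1/\sqrt{k(n)})$ while Eve tailors $z^n(\cdot)$ to each of $2^{nR}$ messages --- but you do not say how to overcome it, and the phrasing suggests you are contemplating a union bound, which fails for arbitrarily slowly growing $k(n)$. In your construction the analogue of the paper's linearity trick is obscured by two complications: the posterior weights $P(K=k\mid M=m,\Ccal)$ are only approximately uniform and depend on the whole codebook, and the message probability $P(M=m\mid\Ccal)$ is correlated with the very columns $\{X^n(k,m)\}_k$ whose empirical law you need to control. A workable route does exist: conditioned on $K=k_0$, the message $M$ is a function of $(X^n,\Ccal_{k_0})$ alone, so the remaining columns $\{X^n(k,M):k\neq k_0\}$ stay i.i.d.\ $P_X^n$ even after conditioning on $M$; this lets you average rather than union-bound. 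But that independence observation is the crux, and it is absent from the proposal.

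\textbf{Minor point on the converse.} Your Fano step gives $H(X)\leq R+\tfrac{1}{n}\log k(n)+\eps_n$, which would not yield $R\geq H(X)$ if $k(n)$ happened to grow exponentially. Use instead that $K$ is independent of $X^n$, so $nH(X)\leq I(X^n;M,K)+n\eps_n = I(X^n;M\mid K)+n\eps_n\leq nR+n\eps_n$, which gives the bound with no restriction on $k(n)$.
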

If we wanted to consider \emph{rates} of secret key, we would set $k(n)=2^{nR_0}$ and define $(R,R_0,D)$ to be achievable if there exists a sequence of codes such that (\ref{errdefn}) and (\ref{paydefn}) hold. Then, by Theorem~\ref{mainthm}, we would have that $(R,R_0,D)$ is achievable if and only if
\begin{IEEEeqnarray*}{lCl}
 R>H(X) && R>H(X)\\
 R_0>0 &\qquad\quad \text{or}\qquad\quad& R_0=0\\
 D\leq \min_z \Ebb[d(X,z)] &&  D=0
\end{IEEEeqnarray*}
This is the solution to the lossless case of the problem posed in \cite{Yamamoto1997}. It should be noted that with the proper choice of auxilliary random variables, the converse bound on $R_0$ in \cite{Yamamoto1997} is actually the trivial bound, $R_0\geq 0$.

With Theorem~\ref{mainthm} in hand, we are able to say something about the usefulness of a finite amount of secret key. The following corollary asserts that the cleverest adversary suffers close to maximal distortion even if the number of secret keys stays constant as blocklength increases.
\begin{cor}
 \label{finite_cor}
 Fix $P_X$ and $d(x,z)$, and denote $D_{\max}=\min_z \Ebb[d(X,z)]$. For all $D<D_{\max}$ and $R>H(X)$, there exists $k^*\in\Nbb$ such that $(R,D)$ is achievable under $k(n)=k^*$.
\end{cor}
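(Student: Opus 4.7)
The plan is to revisit the achievability construction behind Theorem~\ref{mainthm}, but with the key size held at a fixed constant $k^*\in\Nbb$, and observe that the resulting guarantee on Eve's distortion degrades gracefully---rather than collapsing---when $k(n)$ is replaced by a constant. Intuitively, such a scheme presents Eve (via her observation $M$) with $k^*$ candidate source reconstructions, one per possible key value, which behave asymptotically as $k^*$ independent draws from $P_X^n$. Her best-response per-symbol distortion therefore approaches
\begin{equation*}
 D(k^*)\;:=\;\Ebb\!\left[\,\min_{z\in\Zcal}\;\frac{1}{k^*}\sum_{j=1}^{k^*} d(X_j,z)\right],
\end{equation*}
where $X_1,\dots,X_{k^*}$ are i.i.d.\ $P_X$; and by the strong law of large numbers $D(k^*)\to D_{\max}$ as $k^*\to\infty$, so for any $D<D_{\max}$ we can fix $k^*$ with $D(k^*)>D$.

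The scheme I would use is the natural one: for each $k\in[k^*]$, draw an independent random codebook $\{C_k(m)\}_{m\in\Mcal}$ with entries i.i.d.\ $P_X^n$; Alice's encoder picks any $m$ with $C_K(m)=X^n$, and Bob returns $\hat{X}^n=C_K(M)$. Because $|\Mcal|=2^{nR}$ with $R>H(X)$, a standard covering calculation shows that such an $m$ exists simultaneously for all typical $X^n$ with probability tending to one, so (\ref{errdefn}) holds. For (\ref{paydefn}), I would introduce the $k^*$ ``decoy'' reconstructions $\tilde X_k^n:=C_k(M)$ and argue---via the soft-covering step already at the heart of the proof of Theorem~\ref{mainthm}---that the posterior of $(K,\tilde X_1^n,\dots,\tilde X_{k^*}^n)$ given $M$ is, in total variation, close to having $K$ uniform on $[k^*]$ together with $k^*$ independent copies of $P_X^n$, with $X^n=\tilde X_K^n$. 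Coordinatewise minimization of Eve's expected distortion against this posterior then gives the asymptotic bound $D(k^*)$.

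The main obstacle is extracting an explicit finite-$k^*$ guarantee from the soft-covering analysis: the argument underlying Theorem~\ref{mainthm} uses $k(n)\to\infty$ only to make $D(k(n))\to D_{\max}$, and here I instead need to stop at an intermediate $k^*$ and retain the bound $D(k^*)$. Once the approximate conditional-independence structure of the decoys is established, the remainder is a routine law-of-large-numbers computation showing $D(k^*)\to D_{\max}$, which in turn lets us pick $k^*$ from any prescribed $D<D_{\max}$.
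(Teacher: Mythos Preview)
Your approach is quite different from the paper's. The paper does not revisit the achievability construction at all: it derives the corollary from Theorem~\ref{mainthm} by a brief diagonal/contradiction argument. Assuming that for every constant $\tilde k$ the pair $(R,D)$ fails, one extracts for each $\ell$ a blocklength $N_\ell>N_{\ell-1}$ at which $\ell$ keys still give distortion below $D$, and sets $\hat k(n)=\ell$ on $(N_{\ell-1},N_\ell]$; this produces an increasing unbounded $\hat k(n)$ along which the optimal distortion stays below $D<D_{\max}$ infinitely often, contradicting Theorem~\ref{mainthm}. The whole thing is a few lines and treats Theorem~\ref{mainthm} as a black box.

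Your direct route has the merit of being constructive---it names an explicit $k^*$, namely any integer with $D(k^*)>D$---which the paper's diagonal argument does not. Two points are worth flagging, however. First, the proof of Theorem~\ref{mainthm} here is not a soft-covering argument; it partitions the typical set into equal-size bins of constant type and uses a sampling-without-replacement estimate (Lemma~\ref{SampRepl}), so you cannot simply appeal to a soft-covering step ``already at the heart'' of that proof. Second, your particular scheme with $k^*$ independent i.i.d.\ codebooks makes the crucial step---that Eve's posterior over $K$ given $M$ is nearly uniform---genuinely nontrivial: for a fixed codebook realization that posterior is proportional to $P_X^n(C_k(M))$, which varies across $k$, and this is exactly the step you leave vague. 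If instead you reuse the paper's partition-by-type scheme with constant bin size $k^*$, every sequence in a bin has the same probability and the posterior over $K$ is \emph{exactly} uniform; the only place the paper uses $k(n)\to\infty$ is the law-of-large-numbers step for the bin's per-coordinate empirical distribution, and replacing that step with your quantity $D(k^*)$ is then routine.
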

\begin{proof}[Proof of Corollary~\ref{finite_cor}]
 Suppose the contrary. That is, assume there exists $D<D_{\max}$ or $R>H(X)$ such that for all $\tilde{k}\in\Nbb$, $(R,D)$ is not achievable under $k(n)=\tilde{k}$. If we denote the minimum attainable distortion for blocklength $n$ and $\tilde{k}$ secret keys by
 \begin{equation*}
  d_{n,\tilde{k}}=\min_{z^n(m)}\Ebb\left[d^n(X^n,z^n(M))\right],
 \end{equation*}
we are asserting that for all $(n,\tilde{k},R)$ codes, either
\begin{IEEEeqnarray}{rCl}
 \label{corstep1} &\quad& \limsup_{n\rightarrow\infty}\Pbb[X^n\neq\hat{X}^n]>0\\
 \label{corstep2} \text{or} &\quad& \liminf_{n\rightarrow\infty} d_{n,\tilde{k}} < D.
\end{IEEEeqnarray}
In particular, all $(n,\tilde{k},R)$ codes not satisfying (\ref{corstep1}) must satisfy (\ref{corstep2}), which implies that for all $\tilde{k}\in\Nbb$, the sequence $d_{n,\tilde{k}}$ is strictly less than $D$ infinitely often. To arrive at a contradiction, we will define an increasing unbounded sequence $\hat{k}(n)$ such that $d_{n,\hat{k}(n)}$ is strictly less than $D$ infinitely often. Since $D~<~D_{\max}$, such a $\hat{k}(n)$ will imply
\begin{equation*}
 \liminf_{n\rightarrow\infty} d_{n,\hat{k}(n)} < D_{\max},
\end{equation*}
 contradicting Theorem~\ref{mainthm} and completing the proof. To that end, first define the increasing sequence $\{N_\ell\}$ recursively by
\begin{IEEEeqnarray*}{l}
 N_0 = 0\\
 N_\ell = \min\{n>N_{\ell-1}:d_{n,\ell}<D\}.
\end{IEEEeqnarray*}
Then we define $\hat{k}(n)$ by
\begin{equation*}
 \hat{k}(n)=\ell \:\text{ if }\: N_{\ell-1} < n \leq N_\ell.
\end{equation*}
\end{proof}

The proof of Theorem~\ref{mainthm} is presented in the next section, but first we provide some intuition for why the result holds by briefly addressing some of the proof ideas. In designing a code, Alice and Bob can use the secret key $K$ to apply a one-time pad to part of the message so that the adversary effectively knows that the source sequence $X^n$ lies in a subset $B\subset\Xcal^n$, but is unsure which sequence the true one is. The number of sequences in $B$ is $|\Kcal|=k(n)$, the number of secret keys. Under such a scheme, the adversary's optimal strategy for minimizing distortion is to output the following symbol on the $i$th step:
\begin{equation}
 \label{opt_adv}
 z_i(B) = \text{argmin}_{z}\sum_{x^n\in B}\frac{p(x^n)}{\Pbb[X^n\in B]}\,d(x_i,z)
\end{equation}
Note that (\ref{opt_adv}) is the expected value of $d(X_i,z)$ conditioned on the event $\{X^n\in B\}$. Now, if each of the sequences in $A$ were equally likely to be the source sequence, (\ref{opt_adv}) becomes
\begin{IEEEeqnarray*}{rCl}
 z_i(B) &=& \text{argmin}_{z}\sum_{x^n\in B}\frac{1}{|B|}\,d(x_i,z)\\
 &=& \text{argmin}_{z}\sum_{x\in \Xcal} Q_i(x)\,d(x,z),
\end{IEEEeqnarray*}
where $Q_i(x)$ denotes the empirical distribution of the $i$th symbols of the sequences in $B$, i.e.,
\begin{equation*}
 Q_i(x) = \frac{1}{|B|}\sum_{x^n\in B} 1\{x_i=x\}.
\end{equation*}
If we could also guarantee that $Q_i(x)=P_X(x)$ for all $x\in\Xcal$, then (\ref{opt_adv}) would become
\begin{equation}
 \label{ideal}
 z_i(B) = \text{argmin}_{z}\,\Ebb[d(X,z)]
\end{equation}
In the light of this discussion, we want to design a codebook and an encryption scheme so that, roughly speaking,
\begin{equation}
 \label{approxunif}\frac{p(x^n)}{\Pbb[X^n\in B]}\approx \frac{1}{|B|}
\end{equation}
and
\begin{equation}
  \label{QapproxP} Q_i\approx P_X, \;i=1,\ldots,n.
\end{equation}

\begin{figure}
 \begin{equation*}
  B:\quad
  \begin{matrix}
   0 & 0 & 1 & 0 & 0 & 1 & 2 & 2\\
   0 & 2 & 0 & 1 & 1 & 2 & 0 & 0\\
   1 & 0 & 2 & 2 & 0 & 0 & 1 & 0\\
   2 & 1 & 0 & 0 & 2 & 0 & 0 & 1
  \end{matrix}
 \end{equation*}
 \caption{Consider $P_X=\{\frac12,\frac14,\frac14\}$, $\Xcal=\{0,1,2\}$, $n=4$, and $k(4)=8$. Suppose Eve knows that the source sequence $X^4$ is a column of $B$, but does not know which column. Since all the columns are equally likely and the empirical distribution of each row matches $P_X$, Eve's best strategy is to output $\mbox{argmin}_z \Ebb[d(X,z)]$ at each step (see (\ref{ideal})). For example, if the distortion measure were Hamming distance (i.e., $d(x,z)=1\{x\neq z\}$), then Eve would output $(0,0,0,0)$.}
 \label{bin_fig}
\end{figure}

Figure~\ref{bin_fig} gives an example of (\ref{approxunif}) and (\ref{QapproxP}). These ideas are borne out in the proof of Theorem~\ref{mainthm}, which we now turn to.

\section{Proof of Theorem~\ref{mainthm}}

 In preparation for the proof of achievability, we first define $\eps$-typicality for a distribution $P$ with finite support $\Xcal$:
\begin{equation*}
 T^n_\eps(P)=\{x^n\in\Xcal^n:\left|Q_{x^n}(x)-P(x)\right|<\eps,\forall x\in \Xcal\},
\end{equation*}
where $Q_{x^n}(x)=\frac1n \sum_i 1{\{x_i=x\}}$ is the empirical distribution, or ``type'', of $x^n$. Denote the set of types of sequences $x^n\in\Xcal^n$ by $\Pcal^n$, and let $\Pcal^n_\eps\subset \Pcal^n$ denote the set of types of those sequences $x^n\in\Xcal^n$ satisfying $x^n\in T^n_\eps(P_X)$. For $P\in\Pcal^n$, use $|P|$ to denote the number of sequences of type $P$. Finally, define the variational distance between distributions $P$ and $Q$ by
\begin{equation*}
 \lVert P-Q\rVert_{}=\sup_A|P(A)-Q(A)|.
\end{equation*}

We will need a few lemmas. The first three lemmas will aid us in asserting (\ref{QapproxP}).
\begin{lemma}
 \label{ColRowDistr}
Let $P\in\Pcal^n$. Form a matrix whose columns are the sequences with type $P$, with the columns arranged in any order. Then each of the rows of the matrix also has type $P$.
\end{lemma}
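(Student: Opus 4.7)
The plan is to prove the precise quantitative statement that for every position $i \in \{1,\ldots,n\}$ and symbol $a \in \Xcal$, the number
\begin{equation*}
 N_i(a) := |\{x^n \in \Xcal^n : \text{type}(x^n) = P,\; x_i = a\}|
\end{equation*}
equals $|P| \cdot P(a)$. This is exactly what is needed: the $i$th row of the matrix has $N_i(a)$ entries equal to $a$, so its empirical distribution is $N_i(a)/|P|$, and $N_i(a)/|P| = P(a)$ is precisely the assertion that row $i$ has type $P$.

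The first step is a symmetry argument. Since the type of a sequence is invariant under any permutation of its coordinates, the type class of $P$ is closed under the natural action of the symmetric group $S_n$ on $\Xcal^n$. In particular, for any two positions $i$ and $j$, the transposition swapping coordinates $i$ and $j$ restricts to a bijection between $\{x^n : \text{type}(x^n) = P,\; x_i = a\}$ and $\{x^n : \text{type}(x^n) = P,\; x_j = a\}$. Hence $N_i(a)$ is independent of $i$; call this common value $N(a)$.

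The second step is a double-counting identity. I would count the pairs $(x^n, i)$ with $\text{type}(x^n) = P$ and $x_i = a$ in two ways. Summing over $i$ first gives $\sum_i N_i(a) = n \cdot N(a)$. Summing over $x^n$ first gives $|P| \cdot nP(a)$, because every sequence of type $P$ has exactly $nP(a)$ occurrences of $a$ by definition. Equating the two expressions yields $N(a) = |P| \cdot P(a)$, completing the proof.

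I do not foresee a serious obstacle; the lemma is essentially a symmetry/pigeonhole statement, and the only place where care is required is in spelling out that coordinate-permutation invariance of the type class produces the bijection that equates $N_i(a)$ across rows.
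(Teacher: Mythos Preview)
Your proof is correct and follows essentially the same two-step approach as the paper: first use coordinate-permutation invariance of the type class to conclude that all rows have identical type, then average over the whole matrix (your double count) to identify that common type as $P$. The paper compresses both steps into three sentences, but the underlying argument is the same as yours, just less explicitly quantified.
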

\begin{proof}
 Any permutation applied to the rows of the matrix simply permutes the columns. Therefore all the rows have identical type. Since the matrix as a whole has type $P$, each of the rows must be of type $P$ as well.
\end{proof}
\begin{lemma}[see \cite{Diaconis--Freedman1980}]
 \label{SampRepl}
 Suppose an urn $U$ contains $n$ balls, each marked by an element of the set $S$, whose cardinality $c$ is finite. Let $H$ be the distribution of $k$ draws made at random without replacement from $U$, and $M$ be the distribution of $k$ draws made at random with replacement. Thus, $H$ and $M$ are two distributions on $S^k$. Then
 $$\lVert H-M\rVert_{TV}\leq \frac{ck}{n}.$$
\end{lemma}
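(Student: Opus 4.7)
My plan is a coupling argument: I will construct a joint distribution on pairs $(X^k,Y^k)\in S^k\times S^k$ with marginals $M$ and $H$, and then bound $\Pr[X^k\neq Y^k]$. Since $\lVert H-M\rVert_{TV}$ is at most the probability of disagreement under any such coupling, this will yield the claim.

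The coupling is the following. First, draw indices $I_1,\ldots,I_k$ uniformly with replacement from $\{1,\ldots,n\}$ and set $X_j$ equal to the mark of ball $I_j$, so that $X^k\sim M$ by construction. Let $E$ denote the event that $I_1,\ldots,I_k$ are pairwise distinct. Conditioned on $E$, the tuple $(I_1,\ldots,I_k)$ is uniformly distributed over ordered $k$-tuples of distinct indices from $\{1,\ldots,n\}$, and hence the induced mark sequence $X^k\mid E$ has exactly the without-replacement distribution $H$. I will therefore set $Y^k=X^k$ on $E$ and draw $Y^k$ from $H$ independently of $X^k$ on $E^c$; a direct computation then shows $Y^k\sim H$ unconditionally, and $X^k\neq Y^k$ can occur only on $E^c$. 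Consequently $\lVert H-M\rVert_{TV}\leq \Pr[E^c]$.

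The remaining step is to bound $\Pr[E^c]$. A union bound over the $\binom{k}{2}$ possible pairs of colliding indices gives $\Pr[E^c]\leq \binom{k}{2}/n$, which already yields $\lVert H-M\rVert_{TV}\leq k^2/(2n)$. To match the stated $ck/n$ bound---the useful form when the alphabet size $c$ is not much larger than $k$---I would refine the coupling so that ball-level collisions that happen to preserve the mark sequence are not treated as disagreements. The main obstacle is producing this sharper constant $ck/n$ in place of $k^2/n$: the coupling architecture itself is routine, but tightening the analysis requires following the mark-level refinement of \cite{Diaconis--Freedman1980}, exploiting the fact that only $c$ distinct marks are available to control the rate of ``harmful'' collisions rather than bounding by the number of ball collisions alone.
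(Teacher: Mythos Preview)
The paper does not supply its own proof of this lemma; it is quoted from \cite{Diaconis--Freedman1980} and used as a black box. So the only question is whether your sketch actually establishes the stated inequality $\lVert H-M\rVert_{TV}\le ck/n$.

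Your coupling is valid and yields $\lVert H-M\rVert_{TV}\le \Pr[E^c]\le \binom{k}{2}/n$, i.e.\ $k^2/(2n)$, not $ck/n$. You recognize the discrepancy and propose to recover the factor $c$ by ``refining the coupling so that ball-level collisions that happen to preserve the mark sequence are not treated as disagreements.'' As written this is only a hope, and there is a concrete obstacle: on $E^c$ the conditional law of $X^k$ is \emph{not} $H$, even when restricted to mark sequences that could have arisen without replacement, so enlarging the event on which you set $Y^k=X^k$ will in general destroy the marginal $Y^k\sim H$. The argument in \cite{Diaconis--Freedman1980} does not proceed by tightening this ball-index coupling; rather, exchangeability reduces the comparison of $H$ and $M$ to the comparison of the multivariate hypergeometric and multinomial laws of the $c$-dimensional color-count vector, and the bound $ck/n$ comes from a direct estimate on that reduced problem. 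That is the ``mark-level'' idea, but it is a different computation, not a patch on your coupling.

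Two side remarks. First, the bounds $k^2/(2n)$ and $ck/n$ are incomparable; yours is sharper when $k<2c$. Second, for the use made of the lemma in this paper (the chain (\ref{tv1})--(\ref{tv4})), your bound would already suffice: the authors assume $k(n)\in o(|\Xcal|^{nH(P)})$ by discarding excess key, and one can equally well assume $k(n)^2\in o(|\Xcal|^{nH(P)})$, so $k(n)^2/|P|\to 0$. Thus your argument, while not a proof of the lemma as stated, is adequate for the application at hand.
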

Thus, sampling without replacement is close in variational distance to sampling with replacement (i.e, i.i.d.) provided the sample size is small enough and the number of balls is large enough. The rate at which the distance vanishes is important to our problem. The next lemma is a lower bound on the size of a type class.
\begin{lemma}[see \cite{Csiszar--KornerBook}]
\label{TypeSize}
For $P\in\Pcal^n$,
$$|P|\geq (n+1)^{-|\Xcal|}|\Xcal|^{nH(P)}$$
\end{lemma}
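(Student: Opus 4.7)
The plan is to turn the combinatorial lower bound on $|P|$ into a probabilistic one via the standard method-of-types identity. Under $X^n \sim P^n$, every sequence of type $P$ has probability exactly $\prod_{x} P(x)^{nP(x)} = 2^{-nH(P)}$, so
\[
 \Pbb_{P^n}[\mathrm{type}(X^n) = P] \;=\; |P|\cdot 2^{-nH(P)},
\]
and lower-bounding $|P|$ reduces to lower-bounding this probability.

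The main step is to show that $P$ is the most likely type under $P^n$, i.e., $\Pbb_{P^n}[T=Q] \leq \Pbb_{P^n}[T=P]$ for every $Q \in \Pcal^n$. Writing both sides via multinomial coefficients gives
\[
 \frac{\Pbb_{P^n}[T=Q]}{\Pbb_{P^n}[T=P]} \;=\; \prod_{x\in\Xcal}\frac{(nP(x))!}{(nQ(x))!}\,P(x)^{n(Q(x)-P(x))}.
\]
After substituting $P(x) = nP(x)/n$ and using $\sum_x (Q(x)-P(x)) = 0$ to cancel the resulting $n^{n(Q(x)-P(x))}$ factors, the claim reduces to the per-coordinate factorial inequality
\[
 \frac{(nQ(x))!}{(nP(x))!} \;\geq\; (nP(x))^{n(Q(x)-P(x))}, \qquad x \in \Xcal.
\]
I would verify this by splitting into two elementary cases: if $nQ(x) \geq nP(x)$, the left side is a product of $nQ(x)-nP(x)$ integers each at least $nP(x)+1$; if $nQ(x) < nP(x)$, it is the reciprocal of a product of $nP(x)-nQ(x)$ integers each at most $nP(x)$. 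Either way the bound drops out, with degenerate cases $P(x)=0$ handled by the convention $0^0 = 1$ (and the observation that any $Q$ with $Q(x)>0=P(x)$ has $P^n$-probability zero). This small factorial comparison is the one place any real care is needed; everything surrounding it is rearrangement.

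With $P$ established as the mode, the proof finishes quickly. Each type in $\Pcal^n$ is specified by the $|\Xcal|$ integer counts $\{nP(x)\}_{x\in\Xcal} \subset \{0,1,\ldots,n\}$, so $|\Pcal^n| \leq (n+1)^{|\Xcal|}$. Since the probabilities $\Pbb_{P^n}[T=Q]$ over $Q\in\Pcal^n$ sum to $1$ and $P$ maximizes them, $\Pbb_{P^n}[T=P] \geq (n+1)^{-|\Xcal|}$. Combining with the identity from the first paragraph yields $|P|\cdot 2^{-nH(P)} \geq (n+1)^{-|\Xcal|}$, which rearranges to the stated bound.
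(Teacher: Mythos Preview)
Your argument is correct and is precisely the standard method-of-types proof from Csisz\'ar--K\"orner that the paper cites; the paper itself does not reproduce a proof but simply refers to that source, so there is nothing further to compare. One cosmetic point: the paper writes the bound as $(n+1)^{-|\Xcal|}|\Xcal|^{nH(P)}$ whereas you conclude with $(n+1)^{-|\Xcal|}2^{nH(P)}$; these agree once the base of the entropy is made consistent, so just match the paper's convention when you write it up.
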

The final lemma concerns sufficient statistics in the context of our measure of secrecy.
\begin{lemma}
\label{suffstat}
Let $X,Y$, and $Z$ be random variables that form a markov chain $X-Y-Z$ and let $g$ be a function on $\Acal \times \Zcal$. Define two sets of functions, $F=\{f:\Xcal\times\Ycal\rightarrow \Acal\}$ and $F'=\{f:\Ycal\rightarrow\Acal\}$. Then
\begin{equation*}
 \min_{f\in F}\Ebb[g(f(X,Y),Z)]=\min_{f\in F'}\Ebb[g(f(Y),Z)].
\end{equation*}
\end{lemma}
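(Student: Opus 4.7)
The plan is to establish the equality by proving both inequalities. One direction is essentially trivial: every $f' \in F'$ can be regarded as an element of $F$ that ignores its first argument, so $F' \subseteq F$ and therefore $\min_{f \in F}\Ebb[g(f(X,Y),Z)]\leq \min_{f' \in F'}\Ebb[g(f'(Y),Z)]$. The substantive direction is the reverse inequality, and this is where I would use the Markov chain hypothesis.

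To show $\min_{f \in F}\Ebb[g(f(X,Y),Z)] \geq \min_{f' \in F'}\Ebb[g(f'(Y),Z)]$, I would fix an arbitrary $f \in F$ and construct a specific $f' \in F'$ that does at least as well. The idea is to condition on $Y$: because $X - Y - Z$ is a Markov chain, the conditional distribution of $Z$ given $(X,Y) = (x,y)$ depends only on $y$. Consequently,
\begin{equation*}
 \Ebb[g(f(X,Y),Z)\mid Y=y] = \sum_x \Pbb(X=x\mid Y=y)\,h_{f,y}(x),
\end{equation*}
where $h_{f,y}(x) := \Ebb[g(f(x,y),Z)\mid Y=y]$ is well-defined precisely because conditioning further on $X=x$ does not alter the distribution of $Z$. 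The right-hand side is a convex combination and is therefore at least $\min_x h_{f,y}(x)$.

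Now define $f'(y) := f(x^*(y),y)$, where $x^*(y) \in \arg\min_{x} h_{f,y}(x)$ (breaking ties arbitrarily). Since $f'(Y)$ is constant given $Y=y$, we have $\Ebb[g(f'(Y),Z)\mid Y=y] = h_{f,y}(x^*(y)) \leq \Ebb[g(f(X,Y),Z)\mid Y=y]$ for every $y$. Taking expectations over $Y$ yields $\Ebb[g(f'(Y),Z)] \leq \Ebb[g(f(X,Y),Z)]$, and since this holds for every $f \in F$, the reverse inequality follows.

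The only subtle point, and the step I would check most carefully, is the measurability of $x^*(y)$, but since $\Xcal$ and $\Ycal$ are finite in the application (and the distortion setup here is discrete), this is automatic and the selection presents no difficulty. The crux of the argument is the use of the Markov property to replace the joint expectation by an expectation that depends on $x$ only through $f(x,y)$, after which $X$ can be optimized out pointwise in $y$.
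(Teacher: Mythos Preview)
Your proof is correct and follows essentially the same route as the paper: both directions use $F'\subset F$ for the easy inequality, and for the reverse both condition on $Y$, invoke the Markov property to write $\Ebb[g(f(x,y),Z)\mid Y=y]$ as a function $h(x,y)$ not depending on $x$ through the law of $Z$, then select $x^*(y)$ minimizing $h(\cdot,y)$ and set $f'(y)=f(x^*(y),y)$. The only cosmetic difference is that the paper works directly with the optimal $f^*\in F$ whereas you argue for arbitrary $f$, but the substance is identical.
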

\begin{proof}[Proof of Lemma~\ref{suffstat}]
($\leq$) follows from $F'\subset F$. As for $(\geq)$, we have
 \begin{IEEEeqnarray*}{rCl}
  \min_{f\in F}\Ebb[g(f(X,Y),Z)] &=& \sum_{x,y}p(x,y)\sum_zp(z|y)g(f^*(x,y),z)\\
  &=& \sum_{x,y} p(x,y)h(x,y)\\
  &=& \sum_y p(y)\Ebb[h(X,Y)|Y=y]
 \end{IEEEeqnarray*}
There exists $x^*(y)$ such that $$h(x^*(y),y)\leq \Ebb[h(X,Y)|Y=y],$$ so we define $f\in F'$ by $f(y)=f^*(x^*(y),y)$. Then
\begin{IEEEeqnarray*}{rCl}
 \sum_y p(y)\Ebb[h(X,Y)|Y=y] &\geq& \sum_yp(y)h(x^*(y),y)\\
 &=& \sum_yp(y)\sum_zp(z|y)g(f(y),z)\\
 &=& \Ebb[g(f(Y),Z)]\\
 &\geq& \min_{f\in F'}\Ebb[g(f(Y),Z)] 
\end{IEEEeqnarray*}
\end{proof}
\bigskip
Now we begin the proof of Theorem~\ref{mainthm}.
\begin{proof}[Proof of Theorem~\ref{mainthm}]
 The proof of the converse is straightforward: the converse for lossless source coding gives us $R>H(X)$, and Eve can always produce the constant sequence $(z^*,\ldots,z^*)$ so that her distortion never exceeds $\min_z\Ebb[d(X,z)]$.

To begin the proof of achievability, fix $P_X$, $d(x,z)$, and an increasing, unbounded sequence $k(n)$. Let $\eps>0$ and $R~>~H(X)$. We will show that there exists a codebook of $2^{nR}$ sequences and an encryption scheme such that
\begin{equation}
 \label{err}
 \Pbb[X^n\neq\hat{X}^n]<\eps
\end{equation}
 and
\begin{equation}
 \label{pay}
 \min_{z^n(m)}\Ebb\left[d^n(X^n,z^n(M))\right] > \min_z \Ebb[d(X,z)]-\delta(\eps)
\end{equation}
for sufficiently large $n$, where $\delta(\eps)\rightarrow 0$ as $\eps\rightarrow0$.

Our codebook, the set of sequences that Alice encodes uniquely, consists of the $\eps$-typical sequences; thus, (\ref{err}) is satisfied by the law of large numbers. For blocklength $n$, we want to consider a partition of the set of typical sequences into equally sized subsets (or ``bins'') of length $k(n)$. A partition will let us encode the message in two parts: in the first part, we will reveal the identity of the bin that contains the source sequence, and in the second part we will encrypt the location within the bin by using the secret key to apply a one-time pad. Effectively, the second part of the message will be useless to Eve. We will denote the set of bins by $\Bcal$, so that each element of $\Bcal$ is a bin of $k(n)$ sequences.

For a given partition of the typical sequences, the encoder operates as follows. If $X^n$ is typical and is the $L$th sequence in bin $J$, then transmit the pair $(J,L\oplus K)$, where $K$ is the secret key and $\oplus$ is addition modulo $k(n)$. If $X^n$ is not typical, transmit a random message.

In addition to requiring equal-sized bins, we further restrict our attention to partitions in which each bin only contains sequences of the same type\footnote{More precisely, we focus on partitions in which the number of bins in violation is polynomial in $n$. The set of such partitions is nonempty since the total number of types is polynomial in $n$ (see \cite{Csiszar--KornerBook}). The forthcoming analysis is easily adjusted accordingly.}, and denote the set of bins of type $P$ by $\Bcal_P$; thus, $\Bcal=\bigcup_{P\in\Pcal_\eps^n} \Bcal_P$. This restriction addresses (\ref{approxunif}).

We claim that there exists a partition so that (\ref{pay}) is satisfied. To do this, we first select a partition uniformly at random and average the minimum attainable distortion over all partitions. We use $\Ebb_\pi$ to indicate that expectation is being taken with respect to a random partition. If (\ref{pay}) holds for the average, then it must hold for at least one partition. This use of the probabilistic method should be distinguished from ``random binning'' that is often used in information theory. In random binning, each sequence is assigned to a random bin; in particular, the bin sizes are random, whereas here they are of size $k(n)$.

Selecting a partition at random is the same as drawing typical sequences without replacement to fill equal-sized bins of uniform type. This is also equivalent to first fixing a partition $\Bcal$ that meets the criteria, then for each $P\in\Pcal_\eps^n$ randomly permuting the sequences in $\Bcal_P$, selecting the $|\Pcal_\eps^n|$ random permutations independently.

Denoting the left-hand side of (\ref{pay}) by $D(n)$, we first use Lemma~\ref{suffstat}, then restrict attention to typical sequences to get
\begin{IEEEeqnarray*}{rCl}
 \Ebb_\pi[D(n)] &=& \Ebb_\pi \left[\min_{z^n(j,l)}\Ebb\left[d^n(X^n,z^n(J,L\oplus K))\right]\right]\\
 &=&  \Ebb_\pi \left[\min_{z^n(j)}\Ebb\left[d^n(X^n,z^n(J))\right]\right]\\
 &\geq& \Ebb_\pi \bigg[\min_{z^n(j)}\sum_{x^n\in T^n_\eps(P_X)}p(x^n)d^n(x^n,z^n(J(x^n)))\bigg]
\end{IEEEeqnarray*}
Note that although $x^n$ is deterministic when inside the summation above, the bin $J(x^n)$ that it belongs to is random because we are considering a random partition. Summing over bins and moving the summation outside, we have
\begin{IEEEeqnarray*}{rCl}
 \Ebb_\pi[D(n)] &\geq& \Ebb_\pi\bigg[\min_{z^n(j)}\sum_{B\in \Bcal}\sum_{x^n\in B}p(x^n)d^n(x^n,z^n(J(x^n))\bigg]\\
 &=& \Ebb_\pi\bigg[\sum_{B\in \Bcal}\min_{z^n}\sum_{x^n\in B}p(x^n)d^n(x^n,z^n)\bigg]
\end{IEEEeqnarray*}
Next, we sum over types as well, and use the fact that all sequences of type $P$ have probability $$c_P=|\Xcal|^{n(H(P)+D(P||P_X))}$$ to get
\begin{IEEEeqnarray*}{rCl}
 \IEEEeqnarraymulticol{3}{l}{\Ebb_\pi[D(n)]}\\
 &\geq& \Ebb_\pi\bigg[\sum_{P\in \Pcal^n_\eps}\sum_{B\in \Bcal_P}\min_{z^n}\sum_{x^n\in B}p(x^n)d^n(x^n,z^n)\bigg]\\
 &=& \Ebb_\pi\bigg[\sum_{P\in \Pcal^n_\eps}\sum_{B\in \Bcal_P}c_P\min_{z^n}\sum_{x^n\in B}d^n(x^n,z^n)\bigg]
\end{IEEEeqnarray*}
Applying the separability of $d^n(x^n,z^n)$ and moving the expectation inside, we have
\begin{IEEEeqnarray*}{rCl}
 \IEEEeqnarraymulticol{3}{l}{\Ebb_\pi[D(n)]}\\
 &\geq& \Ebb_\pi\bigg[\frac1n\sum_{i=1}^n\sum_{P\in \Pcal^n_\eps}\sum_{B\in \Bcal_P}c_P\min_{z}\sum_{x^n\in B}d(x_i,z)\bigg]\\
 \yesnumber \label{expect1}&=& \frac1n\sum_{i=1}^n\sum_{P\in \Pcal^n_\eps}\sum_{B\in \Bcal_P}c_P\,\Ebb_\pi\bigg[\min_{z}\sum_{x^n\in B}d(x_i,z)\bigg]
\end{IEEEeqnarray*}
Keep in mind that the elements of $B$ are random codewords because the partition is random. 

Now we analyze the expectation in (\ref{expect1}). Viewing $\Bcal_P$ as a matrix with the constituent sequences forming the columns, we denote the $i$th row by the random sequence $(Y_1,\ldots,Y_{|P|})$. Furthermore, we let $(Y_1,\ldots,Y_{k(n)})$ denote the $i$th row of $B~\in~\Bcal_P$; this is acceptable because the forthcoming analysis is the same for each row of each bin. For ease of exposition, we now refer to $k(n)$ as simply $k$ with the dependence on $n$ understood. Thus, we have
\begin{equation*}
 \Ebb_\pi\bigg[\min_{z}\sum_{x^n\in B}d(x_i,z)\bigg] = k\cdot\Ebb_\pi\bigg[\min_{z}\sum_{x\in \Xcal}Q_{Y^k}(x)d(x,z)\bigg]
\end{equation*}
where $Q_{Y^k}$ is the type of $Y^k$. Denoting the event $\{Y^k~\in~T_\eps^k(P)\}$ by $A$, we have by the towering property of expectation that
\begin{IEEEeqnarray*}{rCl}
 \IEEEeqnarraymulticol{3}{l}{\Ebb_\pi\bigg[\min_{z}\sum_{x^n\in B}d(x_i,z)\bigg]}\\
 \yesnumber \label{expect2}&\geq& k\cdot\Pbb_\pi[A]\cdot\Ebb_\pi\bigg[\min_{z}\sum_{x\in \Xcal}Q_{Y^k}(x)d(x,z)\,\Big\vert\, A\bigg].
\end{IEEEeqnarray*}
Focusing attention on the conditional expectation in (\ref{expect2}), we use the definition of typicality and the triangle inequality to get
\begin{IEEEeqnarray*}{rCl}
 \IEEEeqnarraymulticol{3}{l}{\Ebb_\pi\bigg[\min_{z}\sum_{x\in \Xcal}Q_{Y^k}(x)d(x,z)\,\Big\vert\, A\bigg]}\\
 &\geq& \Ebb_\pi\bigg[\min_{z}\sum_{x\in \Xcal}(P_X(x)-2\eps)d(x,z)\,\Big\vert\, A\bigg] \\
 &=& \min_{z}\sum_{x\in \Xcal}(P_X(x)-2\eps)d(x,z)\\
 \yesnumber \label{step1} &\geq& \min_{z} \Ebb[d(X,z)]-\delta_1(\eps)
\end{IEEEeqnarray*}
where $\delta_1(\eps)=2\eps\min_z\sum_x d(x,z)$ goes to zero as $\eps\rightarrow0$ because the distortion measure $d$ is bounded. Now we bound $\Pbb[A]$ in (\ref{expect2}). We can assume that $k(n)\in o(|\Xcal|^{nH(P)})$ without loss of generality because Alice and Bob can simply ignore extra secret key. Invoking Lemmas~\ref{ColRowDistr}-\ref{TypeSize} to address (\ref{QapproxP}), we have
\begin{IEEEeqnarray}{rCl}
 \label{tv1}\Big\lVert P_{Y^k}-\prod_k P \Big\rVert &=& \Big\lVert P_{Y^k}-\prod_k Q_{Y^{|P|}} \Big\rVert\\
 \label{tv2}&\leq& \frac{|\Xcal|\cdot k(n)}{|P|}\\
 \label{tv3}&\leq& \frac{|\Xcal|\cdot k(n)}{(n+1)^{-|\Xcal|}|\Xcal|^{nH(P)}}\\
 \label{tv4}&\leq& \eps
\end{IEEEeqnarray}
for large enough $n$, where (\ref{tv1}) follows from Lemma~\ref{ColRowDistr}, (\ref{tv2}) follows from Lemma~\ref{SampRepl}, and  (\ref{tv3}) follows from Lemma~\ref{TypeSize}. By the definition of variational distance and the law of large numbers, (\ref{tv4}) gives
\begin{IEEEeqnarray*}{rCl}
 \Pbb_\pi[Y^k\in T_\eps^n(P)] &\geq& \Pbb_{\prod P}[Y^k\in T_\eps^n(P)]-\eps \\
 \yesnumber \label{step2} &\geq& 1-2\eps
\end{IEEEeqnarray*}
for large enough $n$. The notation $\Pbb_{\prod P}$ indicates that the probability is evaluated with respect to the i.i.d. distribution $\prod_k P$. Now, substituting (\ref{step1}) and (\ref{step2}) into (\ref{expect2}), we have
\begin{equation}
 \label{bigstep}
 \Ebb_\pi\bigg[\min_{z}\sum_{x^n\in B}d(x_i,z)\bigg] \geq k\cdot(\min_z \Ebb[d(X,z)]-\delta_2(\eps)).
\end{equation}
Upon substituting (\ref{bigstep}) into (\ref{expect1}), we conclude the proof by noting that
\begin{IEEEeqnarray*}{rCl}
 \frac1n\sum_{i=1}^n\sum_{P\in \Pcal^n_\eps}\sum_{B\in \Bcal_P}c_P\cdot k(n) &=& \Pbb[X^n\in T_\eps^n(P_X)]\\
 &\geq& 1-\eps
\end{IEEEeqnarray*}
for large enough $n$.
\end{proof}

\section{Conclusion}

If an eavesdropper is trying to reconstruct an information sequence in the Shannon cipher system, we have shown that even small amounts of secret key enable the cipher to cause maximal distortion in the eavesdropper's estimate.  Any positive rate of secret key will suffice.  However, the {\em rate} of secret key, implying exponential growth in the number of secret key assignments, is not even the right way to discuss the theoretical limits.  Corollary \ref{finite_cor} shows that the proper question to address is the tradeoff between secret key size and guaranteed distortion, irrespective of the transmission length.

\section{Acknowledgements}

This work was supported by the National Science Foundation (NSF) through the grant CCF-1116013 and by the Defense Advanced Research Projects Agency (DARPA) through the award HR0011-07-1-0002.

\end{document}